\documentclass[10pt]{article}
\usepackage{amssymb}
%
%
%
%
\newcommand{\bbbone}{{\mathchoice {\rm 1\mskip-4mu l} {\rm 1\mskip-4mu l}    {\rm 1\mskip-4.5mu l} {\rm 1\mskip-5mu l}}}

\newcommand{\bC}{{\mathbb C}}

\newcommand{\bN}{{\mathbb N}}

\newcommand{\bX}{{\mathbb X}}

%
%

\newcommand{\de}{\delta}

\newcommand{\et}{\eta}

\renewcommand{\th}{\theta} 

\newcommand{\rh}{\rho}
\newcommand{\si}{\sigma}

\newcommand{\ps}{\psi}

\newcommand{\vphi}{\varphi}

%
%

\newcommand{\Ps}{\Psi}

%
%
\newcommand{\bGa}{{{\rm I}\kern-.16em \Gamma}}
%
%

\newcommand{\beq}{{\bar\beta}}

\newcommand{\etq}{{\bar\eta}}

\newcommand{\thq}{{\bar\theta}}

\newcommand{\psq}{{\bar\psi}}

%

%

%
%

\newcommand{\cC}{{\cal C}}
\newcommand{\cD}{{\cal D}}
\newcommand{\cE}{{\cal E}}
\newcommand{\cF}{{\cal F}}
\newcommand{\cG}{{\cal G}}
\newcommand{\cH}{{\cal H}}

\newcommand{\cK}{{\cal K}}

\newcommand{\cN}{{\cal N}}

\newcommand{\cV}{{\cal V}}
\newcommand{\cW}{{\cal W}}

%
%
%

%

\newcommand{\del}{\partial}

\newcommand{\abs}[1]{{\left\vert #1 \right\vert}}
\newcommand{\norm}[1]{{\left\Vert #1 \right\Vert}}

\newcommand{\Ref}[1]{$(\ref{#1})$}
%
%
\newcommand{\True}[1]{\; \bbbone\left( #1 \right) \;}
\newcommand{\sfrac}[2]{{\textstyle \frac{#1}{#2}}}

\newcommand{\pli}{\prod\limits}

\newif\ifintrmk
%
%

%

%

%
%

\newcommand{\E}{{\rm e}}
\newcommand{\I}{{\rm i}}
\newcommand{\dd}{{\rm d}}
\renewcommand{\beq}{\begin{equation}}
\newcommand{\eeq}{\end{equation}}

\newcommand{\cre}{a^+}
\newcommand{\ann}{a^-}
\newcommand{\psp}{{\psi_+}}
\newcommand{\psm}{{\ps_-}}

\newcommand{\Gra}[1]{{\cG}({#1})}
\newcommand{\Kin}{\cE}

\newcommand{\Tr}{\mbox{ Tr }}

\newcommand{\UN}{{u_N}}

\newcommand{\gsi}{c^+}
\newcommand{\gsiq}{c^-}
\newcommand{\ggsi}{c}

\newcommand{\Ind}{X} 
\newcommand{\Xx}{x}
\newcommand{\Xy}{y}

\newcommand{\tnorm}[1]{|\!|\!| #1 |\!|\!|}

%
%
\newtheorem{satz}{Theorem}

\newtheorem{lemma}[satz]{Lemma}

\newenvironment{proof}{\par\noindent {\it Proof:} \hspace{7pt}}%
{\hfill\hbox{\vrule width 7pt depth 0pt height 7pt} \par\vspace{10pt}\noindent}
%
%
%
\begin{document}

\title{Clustering of fermionic truncated expectation values
via functional integration}
\author{Manfred Salmhofer \\ 
\small Institut f\" ur Theoretische Physik, Universit\" at Heidelberg \\ 
\small Philosophenweg 19, 69120 Heidelberg, Germany}

\date{}
\maketitle

\normalsize 
\begin{abstract}
\noindent
I give a simple proof that the correlation functions of many-fermion systems
have a convergent functional Grassmann integral representation, 
and use this representation to show that the 
cumulants of fermionic quantum statistical 
mechanics satisfy $\ell^1$--clustering estimates.
\end{abstract}

\section{Introduction}
Formal coherent--state functional integrals of the type 
\begin{eqnarray}
Z
&=& 
\mbox{ Tr } \E^{ - \beta (H - \mu N)}
\\
&=&
\int 
\prod\limits_{\tau\in [0,\beta)\atop x \in \Lambda}
\dd \bar\phi(\tau,x) \dd \phi(\tau,x)\;
\E^{- \int_0^\beta \dd \tau 
\left( 
\bar\phi(\tau,x) ({\del \over \del \tau} + \mu ) \phi(\tau,x)
- H (\bar\phi(\tau) , \phi(\tau))
\right)}
\nonumber
\end{eqnarray}
have become ubiquitous in the physics literature. These formulas, 
and their generalizations to correlation functions, have become 
powerful heuristic and calculational tools in theoretical physics. 
However, even when space $\Lambda$ is replaced by a {\em finite} set, 
their mathematical meaning is not obvious because the product measure 
over a continuum of $\tau $ does not exist. 
The natural way to obtain a mathematically well-defined identity 
is to use the Lie--Trotter product formula to introduce a discrete 
Euclidian time $\tau$ and recover a rigorous variant of the above formal 
equation in the limit where the time discretization $\Delta \tau$ vanishes.
This involves the solution of a mild, but not completely trivial, 
ultraviolet problem, which arises because only a first--order derivative
with respect to $\tau$ appears in the exponent, so that
the covariance of the Gaussian integral
on the right hand side is not absolutely summable as a function of the 
frequency $\omega$ dual to $\tau$. This problem has been solved both for 
fermions and bosons. The solution for bosons is very recent \cite{BFKT}; 
for fermions, a solution that does not require rather obscure 
multiscale arguments was first given in \cite{FKT} at zero temperature 
($\beta \to \infty$) 
and then in \cite{PSUV} for all $\beta > 0$ (the basic problems and the 
background are discussed in \cite{PSUV}). 

The main technical difficulty in the study of condensed matter systems is, 
of course, not this ultraviolet problem, but the infrared problem associated
to perturbations of Hamiltonians that do not have a gap above the ground state, 
entailing a slow, non--summable, decay of the above--mentioned Gaussian covariance 
at long times and distances, and much effort has gone into understanding its
physical and mathematical consequences.  
On the other hand, proving simple statements about the reduced density matrices
of equilibrium states requires a solution of the UV problem as well. This may be
the reason why no simple functional integral  
proof of decay properties of truncated correlations (a.k.a cumulants)
seems to have been published. 

In this paper I prove the convergence of the above functional integral representation 
for fermions on a finite lattice. The bounds for the generating function for the 
reduced density matrices will be uniform in the lattice size, so that the thermodynamic limit 
can be taken. 

This paper is partly motivated by the work of J.\ Lukkarinen and H.\ Spohn, 
who control the many-body time evolution 
on the short kinetic timescale \cite{LS}
under the hypothesis of an $\ell^1$ clustering property of the 
truncated expectation values. 
The result of \cite{LS} only requires estimates at some 
fixed inverse temperature $\beta$, but not sharp bounds on the dependence on 
$\beta$, so that, using the results of \cite{PSUV}, multiscale techniques 
can be avoided altogether. 
Inspection of the hypothesis of \cite{LS} reveals that the clustering property
needed there is a special case of finiteness of the natural norms used 
in the analysis of fermionic systems in \cite{FKT} and \cite{SW,PSUV}. 
Thus the results given here are easy applications of the results of 
\cite{SW} and \cite{PSUV}, and the main purpose of this paper is therefore 
to make them more widely accessible. To this end, I also include in the Appendix 
a largely streamlined derivation of the functional integral representation 
along the lines of \cite{msbook}, Appendix B, and a new proof
of its convergence using the bounds of \cite{PSUV}.

\section{Setup and main results}

Let $\Ind$ be a finite set. In typical applications, 
$\Ind = \Lambda \times \{1, \ldots, n\}$, where
$\Lambda$ is a spatial lattice and the second index
labels all internal degrees of freedom, e.g.\ spin.
This choice is not necessary; one could also think of 
$\Ind$ as indexing any orthonormal basis of a finite--dimensional
Hilbert space $\cH$, such as the energy eigenbasis of a system
in a trapping potential (with an ultraviolet cutoff). 
Let $\cF = \bigwedge \cH$ be the 
antisymmetric Fock space over $\cH$; it is also finite--dimensional.
Let $\cre $ and $\ann$ denote the standard fermionic creation
and annihilation operators satisfying canonical anticommutation 
relations.

Let $\rho$ be a density operator of the form 
$\rho = Z^{-1} \E^{-\beta (H - \mu N)}$ where $H= H_0 + V$
and $N$ is the number operator (we also call $H_0 - \mu N = K_0$). 
On the finite lattice, all these operators are bounded, and
the partition function $Z = Z_\Ind(\beta,\mu)= $ Tr$_\cF \E^{-\beta (H - \mu N)}$
is well--defined.

\begin{equation}
\langle  A \rangle_{\beta,\mu,\Ind}
=
\frac{1}{Z_\Ind(\beta,\mu)}
\Tr \left(
\E^{-\beta (H - \mu N)} \; A 
\right) 
\end{equation}
(provided that $Z_\Ind \ne 0$, which will be the case in our applications,
where $H=H^*$).

\subsection{Normal--ordered monomials}
 
The reduced density matrices 
in the grand canonical state associated to $H$, $\beta$ and $\mu$ are
\begin{equation}
\gamma_{m,n} (\Xx_1,\ldots,\Xx_m;\Xy_1,\ldots ,\Xy_n)
=
\left\langle 
\pli_{i=1}^m \cre_{\Xx_i} \pli_{j=n}^1 \ann_{\Xy_j}
\right\rangle_{\beta,\mu,\Ind} .
\end{equation}
If the Hamiltonian is $U(1)$--charge--invariant, $\gamma_{m,n} \ne 0$ only if 
$m=n$, in which case I denote it by $\gamma_m$.

The reduced density matrices of the state can be obtained by taking 
derivatives of the following generating functional. 
Let $(\gsi_\Xx)_{\Xx \in \Ind}$ and $(\gsiq)_{\Xx \in \Ind}$ 
be Grassmann source fields satisfying for all $\Xx,\Xy \in \Ind$ and all 
$u,v \in \{-1,1\}$
\begin{equation}
\ggsi_\Xx^u \ggsi_\Xy^v + \ggsi_\Xy^v \ggsi_\Xx^u =0, 
\quad
\ggsi_{\Xx}^u a^{v}_\Xy + a^{v}_\Xy \ggsi_{\Xx}^u = 0
\end{equation}
for all $\Xx,\Xy \in \Ind$, and define
\begin{equation}\label{eq:genred}
Z(\gsiq,\gsi) 
= 
\Tr 
\left\lbrack
\E^{-\beta (H - \mu N)} \; 
\E^{(\gsi, \cre)_\Ind} \E^{(\gsiq, \ann)_\Ind}
\right\rbrack
\end{equation}
where $(\gsi,\cre)_\Ind = \int_\Ind \dd \Xx \; \gsi_x \cre_x$
(here we use a continuum notation $\int_\Ind \dd \Xx = \varepsilon \sum_{\Xx \in \Ind}$ where $\varepsilon$ can be a suitable power of a lattice spacing).
Then 
\begin{equation}
\gamma_m (\Xx_1,\ldots,\Xx_m;
\Xy_1,\ldots ,\Xy_m)
=
\sfrac{\de^m}{\de\gsi_{\Xx_1} \ldots \de \gsi_{\Xx_m}} \;
\sfrac{\de^m}{\de\gsiq_{\Xy_m} \ldots \de \gsiq_{\Xy_1}} \;
\sfrac{Z(\gsiq,\gsi)}{Z(0,0)} \Big\vert_{\gsi=0,\gsiq=0}
\end{equation}
with $\frac{\de}{\de \ggsi^{\pm}_x} = \varepsilon^{-1} \frac{\del}{\del \ggsi^{\pm}_x}$. 
As will become clear in the following, it is natural to take Grassmann-valued
sources.

\subsection{General monomials}
In applications, one often wants to get information about the expectation 
values of monomials where annihilation operators can be to the left of 
creation operators. This easily included in the present setup as follows.
Although the two exponentials under the trace in \Ref{eq:genred}
do not commute, they are easily 
combined into a single one, as follows. 
Because the $\gsi$ and $\gsiq$ are Grassmann variables, 
\begin{equation}
[\gsi_{\Xx}\cre_\Xx, \gsiq_{\Xy} \ann_\Xy]
=
\gsi_{\Xx} \gsiq_{\Xy} 
\{ \cre_\Xx,\ann_\Xy \}
= 
\gsi_{\Xx} \gsiq_{\Xy}\de_{\Ind}(\Xx,\Xy)
\end{equation}
which commutes both with $(\gsi,\cre)_\Ind $ and $(\gsiq,\ann)_\Ind$.
The Baker--Campbell--Hausdorff formula then implies that 
\begin{eqnarray}
\E^{(\gsi,\cre)_\Ind} \; \E^{(\gsiq, \ann)_\Ind}
&=&
\E^{(\gsi,\cre)_\Ind+(\gsiq, \ann)_\Ind + \frac12 (\gsi,\gsiq)_\Ind}
\nonumber\\
&=&
\E^{(\gsiq, \ann)_\Ind}\; \E^{(\gsi,\cre)_\Ind} \; \E^{(\gsi,\gsiq)_\Ind}
\label{eq:commu}
\end{eqnarray}
Thus commuting two such exponentials of source terms produces 
an additional factor $\E^{-(\gsiq , \gsi)_\Ind}$.

Consider now the expectation value of a monomial
$\pli_{i=1}^{2m} a^{\si_i} (\Xx_i)$, where 
$\si = (\si_1, \ldots \si_{2m} ) \in \{ -1,1\}^{2m}$ 
is an arbitrary sequence of creation/annihilation indices.
For $i \in \{1, \ldots, 2m\}$ with $\si_i =+$ let 
$L_i (\si) = \{ j < i : \si_j = -1\}$ be the set of 
labels of annihilation operators to the left of $i$; 
if $\si_i = -1$, set $L_i = \emptyset$. 
Then by \Ref{eq:commu}
\begin{eqnarray}\label{reord}
\pli_{i=1}^{2m} \E^{ \ggsi^{\si_i}_{i} a^{\si_i} (\Xx_i)}
=
\E^{- \sum_i \sum_{j \in L_i (\si)} (\ggsi_j^-,\ggsi_i^+)_\Ind} \;
: \pli_{i=1}^{2m} \E^{ \ggsi^{\si_i}_{i} a^{\si_i} (\Xx_i)} :
\end{eqnarray}
It is an elementary exercise to show that for quasifree states, 
the prefactor gives the familiar $f_\beta$, $1-f_\beta$ structure for the 
expectation values of unordered monomials in quasifree states
(used in the context of deriving the quantum Boltzmann equation in
\cite{ESYQBE} and \cite{LS}). Here $f_\beta$ is the Fermi function
(see below).

\subsection{Truncated expectations}\label{allthesame}

The truncated expectations (also called cumulants, or 
connected parts, of the reduced density matrices), are  
the derivatives of the logarithm of $Z(\gsiq,\gsi)$:
let
\begin{equation}
F(\gsiq,\gsi) = \log  Z(\gsiq,\gsi) 
\end{equation}
then
\begin{equation}
\gamma_m^T (\Xx_1, ... ,\Xx_m; \Xx_{m+1}, ... , \Xx_{2m}) 
=
\sfrac{\de^m}{\de\gsi_{\Xx_1} ... \de \gsi_{\Xx_m}} \;
\sfrac{\de^m}{\de\gsiq_{\Xx_{2m}} \ldots \de \gsiq_{\Xx_{m+1}}} \;
F (\gsiq,\gsi) \Big\vert_{\gsi=0,\gsiq=0}.
\end{equation}
By straightforward expansion of the exponential in $Z = \E^{F}$
and evaluation of the Grassmann derivatives at zero sources 
$\gsiq$ and $\gsi$, one can verify that the $\gamma_m^T$ 
indeed coincide with the truncated expectations, as defined
in \cite{BR}.

Equation \Ref{reord} makes evident that for $m \ge 2$, 
the thus defined truncated expectations $\gamma_m^T$ are indeed 
totally antisymmetric under permutations of $\Xx_1, \ldots, \Xx_{2m}\;$
(as stated in Theorem A.1 of \cite{LS}): 
when taking the expectation value of \Ref{reord}
and then the logarithm, the $\sigma$--dependent prefactor
gives a quadratic contribution which drops out in the derivatives
with respect to the sources $\gsiq$ and $\gsi$ for $m > 1$. 
Antisymmetry follows because the product of exponentials is
invariant under permutations, and because Grassmann derivatives 
anticommute. Moreover, \Ref{reord} implies that the 
truncated expectation values of order $m \ge 2$ of the sequence of monomials
$\left(\pli_{i=1}^{2m} a^{\si_i} (\Xx_i)\right)_{m \in \bN}$ 
are in absolute value 
independent of the ordering, therefore bounds on the
truncated reduced $m$--particle density matrices
imply the same bounds for all truncated expectations 
(for $m \ge 2$).

\subsection{$\ell^1$--clustering}
Let $H_0$ be selfadjoint so that 
$K_0 = H_0 - \mu N = (\cre, \Kin \ann)_\Ind$ 
with a hermitian matrix $\Kin$. 
Let 
\beq
\bC (\tau, E ) 
= 
- 1_{\tau > 0} f_\beta (-E) \E^{-\tau E} 
+ 
1_{\tau \le 0} f_\beta (E) \E^{-\tau E}
\eeq
where 
$f_\beta (E) = (1+ \E^{\beta E})^{-1}$ is the Fermi function,
and let
\begin{equation}\label{timeodef}
\cC_{\tau,\tau'} (\cE)
=
\bC (\tau-\tau',\cE) .
\end{equation}
be the standard Euclidian--time--ordered free fermion two--point function.
$\cC$ is an $\bX \times \bX$--matrix with $\bX = [0,\beta) \times \Ind$.
Let $\delta > 0$ be a determinant bound for $\cC$, as defined in 
Definition 1.2 of \cite{PSUV},
namely $\delta$ is such that for all $n \in \bN$  and all
$x_1, \dots, x_n, y_1, \dots, y_n \in \Ind $ 
\begin{equation}\label{eq:detbound}
 \sup\limits_{p_1, \ldots, p_n, q_1, \ldots, q_n \in B}
\abs{\det \left(\langle p_i \, , \, q_j \rangle C_{x_i y_j}\right)_{1 \le i,j \le n}}
\le {\delta}^{2n}
\end{equation}
(here $B = \{ \xi \in \bC^n : \norm{\xi}_2 \le 1\}$),
and let
\begin{equation}\label{eq:alpha}
\alpha 
=
\sup\limits_{\Xy \in \Ind}
\int_0^\beta \dd \tau
\int_\Ind \dd \Xx 
\abs{\cC (\tau,\cE)}_{\Xx,\Xy}
\end{equation}
be the decay constant of $\cC$ (see Section 4.2 of \cite{PSUV}).
Both of these constants are finite for any $\beta >0$ and finite 
index set $\Ind$; $\delta$ is uniform in $\beta $ and in $\Ind$ if
$f_\beta (\pm \cE)$ is trace class uniformly in $\Ind$ (see below), 
and $\alpha$ is uniform in $\beta$ if $\cC$ decays fast enough
(for sufficient conditions for this in terms of $\cE$,
see \cite{PSUV} and below).

The interaction $V$ can be a generic charge--invariant multibody interaction, 
assumed to be given in normal ordered form
\begin{equation}
V 
=
\sum_{m \in \bN} 
\int \pli_{i=1}^{2m} \dd \Xx_i \;
v_{m}(\Xx_1, \ldots, \Xx_{2m})\;
\cre_{\Xx_1} \ldots \cre_{\Xx_m} \ann_{\Xx_{m+1}} \ldots \ann_{\Xx_{2m}}
\end{equation}
For $f_m: \Ind^{2m} \to \bC $, let 
\beq\label{1infdef}
\abs{f_m}_{1,\infty}
=
\max\limits_j
\sup\limits_{\Xx_j}
\int \pli_{i=1 \atop i \ne j}^{2m} \dd \Xx_i \;
\abs{f_{m}(\Xx_1, \ldots, \Xx_{2m})} .
\eeq
For $h >0$ and a sequence $f = (f_m)_{m \in \bN}$ let
\beq\label{hnormdef}
\norm{f}_h 
=
\sum_{m \ge 1} \abs{f_{m}}_{1,\infty} h^{2m}
\eeq

\begin{satz}\label{th:clu}
Let $\omega = 2 \alpha \delta^{-2}$, and assume that the interaction
$V$ has the properties $V=V^*$ and $\omega \norm{V}_{3\delta} \le 1/2$. 
Then 
for all $m \ge 2$, $\gamma^T_m$ obeys $\ell^1$--clustering,
i.e.
\beq\label{ell1clu}
\abs{\gamma^T_m}_{1,\infty}
\le
2 (m!)^2\; \alpha^{2m} \; \delta^{-2m} \; \norm{V}_{3\delta} , 
\eeq
and the same bound holds for truncated expectation values of 
unordered monomials of degree $2m \ge 4$.
For $m=1$ (and denoting $\gamma_1^T$ at $V=0$ by $\gamma_{1,0}^T$)
\beq\label{2pf}
\abs{\gamma_1^T - \gamma_{1,0}^T}_{1,\infty}
\le 
2 \alpha^{2} \delta^{-2} \norm{V}_{3\delta} .
\eeq 
\end{satz}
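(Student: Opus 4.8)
\medskip
\noindent\emph{Sketch of the intended proof.}
The plan is to combine the convergent functional Grassmann integral representation of $Z(\gsiq,\gsi)$ --- derived, together with a proof of its convergence uniform in $\Ind$, in the Appendix from the bounds of \cite{PSUV} and the determinant bound \Ref{eq:detbound} --- with the fermionic tree/Gram--bound estimates of \cite{SW,PSUV}, and then to identify $\gamma^T_m$ as a connected Green's function. First I would bring $Z(\gsiq,\gsi)$ into functional--integral form: by \Ref{eq:commu} the two source exponentials combine, and in the representation of the Appendix this amounts, under the Grassmann Gaussian measure $\dd\mu_{\cC}$ with the covariance $\cC$ of \Ref{timeodef}, to a single linear source $\E^{(\eta,\psi)}$ --- with $\eta$ supported at one Euclidean time, the $\gsi_\Xx$ coupling to the $\cre$--component and the $\gsiq_\Xx$ to the $\ann$--component --- plus a $c$--number quadratic in the sources. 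Writing $\cV$ for the resulting Euclidean--time--local interaction built from $V$, which has $\norm{\cV}_h=\norm{V}_h$ for every $h$ (so that no explicit $\beta$ enters), and shifting $\psi\mapsto\psi+\cC\eta$, one gets
\[
F(\gsiq,\gsi)-F(0,0)\;=\;Q(\gsi,\gsiq)\;-\;\cV_{\mathrm{eff}}(\cC\eta)\;+\;\cV_{\mathrm{eff}}(0),
\qquad
\cV_{\mathrm{eff}}(\phi):=-\log\!\int\!\dd\mu_{\cC}(\psi)\,\E^{-\cV(\psi+\phi)},
\]
where $Q$ collects the source--quadratic terms (the $c$--number from \Ref{eq:commu} and the $\frac12(\eta,\cC\eta)$ produced by the shift) and $\cV_{\mathrm{eff}}(0)$ is a constant.

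For $m\ge2$, both $Q$ and the constant drop out of the $2m$--fold source derivative, so $\gamma^T_m$ is obtained from the degree--$2m$ part $\cV_{\mathrm{eff},m}$ of $\cV_{\mathrm{eff}}$ by attaching one covariance line $\cC$ to each of its $2m$ legs and then matching the $m$ sources $\gsi$ to the $m$ $\cre$--type legs and the $m$ sources $\gsiq$ to the $m$ $\ann$--type legs; there are $(m!)^2$ such matchings, which is the prefactor in \Ref{ell1clu}. The main analytic input is the bound on $\cV_{\mathrm{eff}}$ supplied by the fermionic tree expansion with Gram bounds: with the constants $\delta$ of \Ref{eq:detbound} and $\alpha$ of \Ref{eq:alpha} (so that $\omega=2\alpha\delta^{-2}$), the hypothesis $\omega\norm{V}_{3\delta}\le1/2$ makes the series defining $\cV_{\mathrm{eff}}$ converge, uniformly in $\Ind$, and the estimates of \cite{SW,PSUV} give
\[
\norm{\cV_{\mathrm{eff}}}_{\delta}\;\le\;\frac{\norm{V}_{3\delta}}{1-\omega\norm{V}_{3\delta}}\;\le\;2\,\norm{V}_{3\delta},
\qquad\mbox{hence}\qquad
\abs{\cV_{\mathrm{eff},m}}_{1,\infty}\;\le\;2\,\delta^{-2m}\,\norm{V}_{3\delta}.
\]
In this estimate the determinant bound $\delta$ controls the sum over the contractions, so that no factorial in the number of interaction vertices survives; the $1/2$ makes the geometric series converge, producing the overall constant $2$; and the gap between the radius $3\delta$ at which $V$ is assumed small and the radius $\delta$ at which $\cV_{\mathrm{eff}}$ is estimated absorbs the cost of the Gram bound.

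It then remains to assemble the pieces. In $\abs{\gamma^T_m}_{1,\infty}$ one external index is held fixed and the other $2m-1$ are integrated; by the same norm estimates of \cite{SW,PSUV} the $2m$ attached covariance lines together contribute a factor $\alpha^{2m}$ through the $\ell^1$--bound \Ref{eq:alpha} (each line being absorbed against one of the Euclidean--time integrations carried by the interaction vertices). Multiplying the $(m!)^2$ matchings, the factor $\alpha^{2m}$ and the bound on $\abs{\cV_{\mathrm{eff},m}}_{1,\infty}$ gives \Ref{ell1clu}. For $m=1$ the source--quadratic term $Q$ does not drop out; since $\cV_{\mathrm{eff}}\equiv0$ at $V=0$, the term $Q$ alone reproduces $\gamma^T_{1,0}$, and $\gamma^T_1-\gamma^T_{1,0}$ is then governed by $\cV_{\mathrm{eff},1}$ with two attached lines, which yields \Ref{2pf}. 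Finally, the statement for unordered monomials of degree $2m\ge4$ is immediate from \Ref{reord}: passing to a different ordering changes the generating functional only by a prefactor quadratic in the sources, which does not affect the $m$-th derivatives when $m\ge2$ (see Section~\ref{allthesame}).

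The step I expect to be the crux is the bound on $\cV_{\mathrm{eff}}$: showing that the fermionic sign cancellations, encoded in the Gram/determinant bound $\delta$ of \Ref{eq:detbound}, compensate the factorial proliferation of connected graphs, and tracking both the norm loss from $3\delta$ to $\delta$ and the factor $\alpha$ attached to each external line. This is precisely the content of \cite{SW} and \cite{PSUV}; once it is in hand, the remaining steps --- the Gaussian shift, the $(m!)^2$ leg combinatorics, and the $m=1$ and unordered--monomial cases --- are routine bookkeeping. A secondary technical point is to justify the shift $\psi\mapsto\psi+\cC\eta$ on the convergent limiting functional integral rather than only on its discretized approximants, which again relies on the uniform bounds of \cite{PSUV}.
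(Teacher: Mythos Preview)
Your proposal is correct and follows essentially the same route as the paper: your $\cV_{\mathrm{eff}}$ is precisely Wilson's effective action $\cW$ of Section~\ref{FIRsec}, the bound you quote from \cite{SW,PSUV} is \Ref{normleq} at $P=0$ (giving \Ref{Wmleq}), and your assembly of the $(m!)^2$ leg combinatorics, the $\alpha^{2m}$ from the attached propagators, the $m=1$ case via the quadratic source term, and the unordered--monomial case via \Ref{reord} all match the paper's argument. The only cosmetic difference is that the paper packages the Gaussian shift and the source--quadratic bookkeeping into the statement of Theorem~\ref{th:Gras} (where the external fields $\etq^{(N)},\eta^{(N)}$ already carry the factor $\cC$), whereas you perform the shift explicitly.
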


\begin{proof}
Eqs.\ 
\Ref{ell1clu} and \Ref{2pf} are proven  in Section \ref{FIRsec}.
That the truncated expectation values of 
unordered monomials of degree $2m \ge 4$ 
satisfy the same bounds as $\gamma^T_m$ 
was shown in Section \ref{allthesame}.
\end{proof}

\bigskip\noindent
{\bf Remarks.}

\begin{enumerate}

\item
For a two--body interaction $V = \int_\Ind \dd \Xx \int_\Ind \dd \Xy
:n_\Xx \; v(\Xx , \Xy) n_\Xy:$, 
$\norm{V}_h = h^4 \abs{v}_{1,\infty}$ is small for small (and summable)
$v$. Thus, when measuring the strength of the interaction, 
one can regard the coefficient of $h^4$ as the coupling constant,
$\lambda = \abs{v}_{1,\infty}$. For general $V$, 
I take the convention that, up to the lowest power of $h$ that appears
in it, $\norm{V}_h$ is the ``coupling constant''. With this convention, 

\begin{itemize}
\item
Equation \Ref{ell1clu} implies (6.19) of \cite{LS}
because $(m!)^2 \le (2m)!$. 

\item
Equation \Ref{2pf} implies (6.20) of \cite{LS}. 

\end{itemize}

The factorial in  \Ref{ell1clu} reflects the 
property that the radius of analyticity of 
$F(\gsiq,\gsi)$ as a function of $(\gsiq,\gsi)$
is finite, while the partition function
$Z(\gsiq,\gsi)$ itself
is an exponentially bounded entire function
of $(\gsiq,\gsi)$.
The advantage of using truncated functions is, of course, that the radius of 
convergence for $\gamma_m^T$ is uniform in $\abs{\Ind}$ and in $N$, 
and that the expansion coefficients converge in the limits
$\abs{\Ind} \to \infty$ and $N \to \infty$, 
while the expansion coefficients in $Z$ diverge
for $\abs{\Ind} \to \infty$.

\item
Similar bounds hold for states that are not charge invariant; 
the restriction was imposed here mainly for notational simplicity.

\item
Estimates for the determinant bound and the decay constant 
are given in \cite{PSUV} for the translation--invariant case
where $\Lambda$ is a discrete torus of sidelength $L$, so that 
the propagator $\cC$ has a standard Fourier representation.
In particular, in this case, $\delta$ is uniform in $L$. 
The decay constant $\alpha $ will in general depend on $\beta$,
but there are a few cases where it does not, in particular
if there is an energy gap, 
i.e. the distance of $0$ to the spectrum of $\cE$ is bounded
below by a fixed positive number.

\item 
More generally, let $(\vphi_\Xx)_{\Xx \in \Ind}$ 
be an orthonormal basis of $\cH$, and let 
\beq
C_{(\tau,\Xx),(\tau',\Xx')} 
= 
\langle \vphi_\Xx | \cC(\tau-\tau',\cE) \vphi_{\Xx'} \rangle .
\eeq 
The spectral representation 
$\cE = \sum_l \epsilon_l |e_l \rangle \langle e_l |$
then gives a natural Gram representation (with $\E^{-\I p \cdot x}$
replaced by $\langle e_l \mid \vphi_{\Xx} \rangle$ and integration 
over $p$ replaced by summation over $l$) that 
replaces the Fourier representation in Lemma 4.1 of \cite{PSUV}.
Proceeding as in the proof of this Lemma, one obtains
\beq
\delta 
\le
2 \max\limits_{\sigma=\pm} \sum_l f_\beta (\sigma \epsilon_l)
\eeq
which is bounded uniformly in $\Ind$ if $f_\beta (\pm \cE)$ is trace 
class uniformly in $\Ind$. This effectively requires an ultraviolet
cutoff because $f_\beta (-E) \to 1 $ as $E \to \infty$. 
The decay constant depends on the properties of the $e_l$ and, 
again, on the presence or absence of spectrum of $\cE$ near $E=0$. 

\item
More detailed bounds can be obtained by the methods 
described in Section \ref{FIRsec}: by using weighted norms, one can
obtain precise decay estimates; by using more details of 
tree expansions \cite{SW} one can also obtain pointwise estimates.
All these estimates depend on the decay constant and the determinant
bound of the covariance. In multiscale approaches to the low--temperature
infrared problem, the bounds given in Section \ref{FIRsec} recover 
the full power counting, as already noted in \cite{PSUV}.

\end{enumerate}

\section{Functional integral representation}\label{FIRsec}
To analyse connected functions, it is convenient to have a graded algebra 
in which all even elements commute. This is never the case in the CAR algebra 
of fermionic operators on Fock space. This is the main motivation for using 
a Grassmann integral representation of the grand canonical traces.

$Z(\gsiq, \gsi )$ has the following Grassmann integral representation. 

\begin{satz}\label{th:Gras}
Let $H_0$ be selfadjoint so that 
$K_0 = H_0 - \mu N = (\cre, \Kin \ann)_\Ind$ 
with a hermitian matrix $\Kin$. 
Let $V$ be any linear map from $\cF$ to $\cF$.

For $m \in \{1,\ldots, N\}$, let $\tau_m = m \frac{\beta}{N}$
and $R^{(N)}$ be the operator on $\bC^N \otimes \bC^{|\Ind|}$
given by
\beq\label{eq:RNdef}
R^{(N)}_{m,n}
=
\cC_{\tau_n, \tau_m} (\Kin) 
\eeq
with $\cC $ given by \Ref{timeodef}.  
Then
\begin{equation}\label{e7}
Z(\gsiq, \gsi ) 
= 
Z_0\;
\E^{(\gsiq, f_\beta(\Kin) \gsi)_\Ind} \;
\lim\limits_{N \to \infty}
\left(
\mu_{R^{(N)}} * \E^{-\cV}
\right) 
(\etq^{(N)} , \et^{(N)} )
\end{equation}
where for $k \in \{ 1, \ldots, N\}$
\beq
\etq_k^{(N)} = \cC_{\beta, \tau_k}^{\rm T} \gsiq,
\quad
\et_k^{(N)} = \cC_{\tau_k,\tau_1} \gsi ,
\eeq
(the {\rm T} denotes the transpose as an operator on $\bC^{|\Ind|}$), 
$Z_0 = \det (1 + \E^{-\beta \Kin}) = $ Tr $\E^{-\beta(H_0 -\mu N)}$
is the partition function for $V=0$, and
\begin{equation}
\cV= \frac{\beta}{N} \sum_{i=1}^N \cN (V) (\psq_i,\ps_i)
\end{equation}
Here $\cN (V)$ is the normal ordered form of $V$, 
and $*$ denotes the standard Grassmann Gaussian convolution 
integral (see \cite{msbook})
\beq
(\mu_C * F) (\bar\phi,\phi)
=
\int \dd \mu_C (\psq,\ps) F (\psq+\bar\phi, \ps+\phi)
\eeq
where the integration is over the Grassmann variables 
$(\psq_{k,\Xx},\ps_{k,\Xx})_{(k,\Xx) \in \{1,\ldots,N \}\times \Ind}$.
\end{satz}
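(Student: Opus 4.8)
The plan is to derive the Grassmann integral representation by the standard Lie–Trotter route, but organizing the algebra so that the ultraviolet subtleties are isolated cleanly, and then invoking the bounds of \cite{PSUV} only at the very end, for convergence. First I would write $Z(\gsiq,\gsi) = \Tr\big[\E^{-\beta(H-\mu N)} \E^{(\gsi,\cre)_\Ind}\E^{(\gsiq,\ann)_\Ind}\big]$ and split $\E^{-\beta(H-\mu N)} = \E^{-\beta(K_0 + V)}$ via the Trotter product formula into $N$ factors $\big(\E^{-\frac{\beta}{N}K_0}\E^{-\frac{\beta}{N}V}\big)^N$, the limit existing since all operators are bounded on the finite-dimensional $\cF$. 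I would then insert fermionic coherent states between consecutive factors — here one needs the overcompleteness relation $\mathbf 1 = \int \dd\bar\phi\,\dd\phi\; \E^{-(\bar\phi,\phi)} |\phi\rangle\langle\bar\phi|$ and the eigenvalue property $\ann_\Xx |\phi\rangle = \phi_\Xx|\phi\rangle$, $\langle\bar\phi|\cre_\Xx = \langle\bar\phi|\bar\phi_\Xx$ — so that matrix elements of $\E^{-\frac{\beta}{N}V}$ become, to the order that survives the $N\to\infty$ limit, the c-number (Grassmann-number) function $\E^{-\frac{\beta}{N}\cN(V)(\bar\phi,\phi)}$, and matrix elements of $\E^{-\frac{\beta}{N}K_0}$ become the quadratic kernel $\exp\{(\bar\phi, \E^{-\frac{\beta}{N}\Kin}\phi)\}$ (up to antiperiodic boundary identification coming from the trace). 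Tracking the source exponentials through this requires the commutation identity \Ref{eq:commu}, which lets me pull $\E^{(\gsi,\cre)}$ and $\E^{(\gsiq,\ann)}$ to act directly on the coherent-state bra and ket at the endpoints $\tau_N=\beta$ and $\tau_1$; this is exactly what produces the shift of the Grassmann fields by $\etq^{(N)}_k = \cC^{\rm T}_{\beta,\tau_k}\gsiq$ and $\et^{(N)}_k = \cC_{\tau_k,\tau_1}\gsi$, together with the explicit prefactor $\E^{(\gsiq, f_\beta(\Kin)\gsi)_\Ind}$ from normal-reordering the source terms against the free two-point function.

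Next I would carry out the Gaussian integral over the quadratic part. Collecting the $N$ coherent-state integrations, the quadratic form in the exponent is the discretized version of $\bar\phi(\del_\tau + \Kin)\phi$; its inverse, with antiperiodic boundary conditions, is precisely the time-ordered kernel $\cC_{\tau_n,\tau_m}(\Kin)$ evaluated at the discrete times, which is the definition \Ref{eq:RNdef} of $R^{(N)}$. Hence the $V$-independent Gaussian part contributes the free partition function $Z_0 = \det(1+\E^{-\beta\Kin})$ together with the source prefactor, and what remains is $\int \dd\mu_{R^{(N)}}(\psq,\ps)\; \E^{-\cV(\psq+\etq^{(N)}, \ps+\et^{(N)})}$, i.e.\ the Grassmann Gaussian convolution $(\mu_{R^{(N)}} * \E^{-\cV})$ evaluated at the shifted fields — which is the right-hand side of \Ref{e7} before the limit. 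The identification of the discrete covariance with $\cC$ at the lattice times is the purely computational linear-algebra core of the derivation; it is a finite-dimensional matrix inversion (circulant-like in the $\tau$ index) and I would relegate the details to the Appendix along the lines of \cite{msbook}, Appendix B.

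Finally, the $N\to\infty$ limit. This is where the UV problem lives: the naive covariance $R^{(N)}$ does not converge to an absolutely summable kernel — in frequency space the free propagator decays only like $1/\omega$ — so one cannot simply pass to the limit termwise in the perturbative expansion of $\E^{-\cV}$. The resolution is to use the bounds of \cite{PSUV}: the determinant bound $\delta$ for $R^{(N)}$ and the decay constant $\alpha$ of $\cC$ are finite uniformly in $N$ (indeed $\delta$ is $N$-independent once $f_\beta(\pm\Kin)$ is trace class, and the relevant norms of $R^{(N)}$ are controlled by those of the continuum $\cC$), so the Grassmann–Gaussian integral $\mu_{R^{(N)}} * \E^{-\cV}$ is, as a function of the external fields, an absolutely and uniformly convergent power series in $\cV$ whenever $\omega\|\cN(V)\|_{3\delta}$ is small, and each coefficient converges as $N\to\infty$ to the corresponding continuum contraction. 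A dominated-convergence argument then gives the stated limit. \emph{The hard part} is precisely this last step — establishing $N$-uniform determinant and decay bounds on $R^{(N)}$ and the resulting analyticity — but since these are exactly the estimates provided by \cite{PSUV} (their Definition 1.2 and Section 4), the proof reduces to quoting them; I would therefore present the Trotter/coherent-state derivation of \Ref{e7} in full in the Appendix and cite \cite{PSUV} for the convergence of the limit.
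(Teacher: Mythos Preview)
Your overall architecture --- Trotter decomposition, coherent-state/Grassmann representation, identification of the discrete covariance as the inverse of the discretized $\partial_\tau + \Kin$ --- matches the paper's. Two points of genuine divergence are worth flagging.

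First, the paper does \emph{not} use the standard Trotter factor $\E^{-\frac{\beta}{N}K_0}\E^{-\frac{\beta}{N}V}$. It uses the variant $\E^{-\frac{\beta}{N}K_0}\big(1-\tfrac{\beta}{N}V\big)$ (Lemma~\ref{le:liemod}), precisely because if $V$ is normal ordered then so is $1-\tfrac{\beta}{N}V$, whereas $\E^{-\frac{\beta}{N}V}$ is not. This makes the passage to Grassmann symbols exact at each finite $N$: the symbol of $D_N = 1-\tfrac{\beta}{N}V$ is literally $\E^{(\psq,\ps)}\big(1-\tfrac{\beta}{N}\cN(V)(\psq,\ps)\big)$, with no error term. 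Only \emph{after} the Gaussian integral is set up does the paper re-exponentiate, bounding $\prod_k\E^{-\frac{\beta}{N}V(\Ps_k)} - \prod_k\big(1-\tfrac{\beta}{N}V(\Ps_k)\big)$ by an elementary $\tnorm{\cdot}_\delta$ estimate of order $N^{-1}$. Your phrase ``to the order that survives the $N\to\infty$ limit'' hides exactly this issue; it can be made to work, but the paper's ordering (linearize first, re-exponentiate last) is cleaner and avoids having to normal-order $\E^{-\frac{\beta}{N}V}$.

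Second, your convergence argument brings in the \cite{PSUV} determinant/decay bounds and a smallness hypothesis $\omega\norm{\cN(V)}_{3\delta}$ small. But the theorem is stated for \emph{any} linear $V$ on the finite-dimensional $\cF$, and the paper's proof respects this: since $\Ind$ is finite, all operators are bounded, so $\tilde Z_N \to Z$ by Lie--Trotter automatically; the re-exponentiation error is $O(N^{-1})$ for any $V$ with $\tnorm{V}_\delta < \infty$ (automatic on a finite lattice); and the existence of the limit in \Ref{e7} then follows for free because the right-hand side equals $Z(\gsiq,\gsi)$ divided by the $N$-independent prefactor $Z_0\,\E^{(\gsiq,f_\beta(\Kin)\gsi)}$. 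No uniform-in-$N$ analyticity and no smallness of $V$ are needed for Theorem~\ref{th:Gras} itself --- those enter only later, in Section~\ref{FIRsec}, to control $\cW$ uniformly in $|\Ind|$. As written, your argument proves the theorem only under an extraneous smallness assumption.
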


\bigskip\noindent
Theorem \ref{th:Gras} is proven in the appendix. 

\bigskip\noindent
In the Theorem, we recognize 
$\E^{-\cW^{(N)}(R^{(N)},\cV)} = \mu_{R^{(N)}} * \E^{-\cV}$
as {\em Wilson's effective action}, which generates the connected amputated 
Green function of the time--discretized
theory. In \cite{SW} and \cite{PSUV}, this function was shown 
to converge uniformly in $\Ind$, provided that $\delta$ and
$\alpha$ are uniform in $\Ind$. The variables $\eta^{(N)}$ and $\et^{(N)}$
on which $\cW$ depends contain a $\cC$ because $Z$ 
generates non--amputated functions, where a propagator
is associated to each external leg. 
The sources $\ggsi$ appear in the time slices $\tau_1$ and $\tau_N=\beta$, 
as expected since the monomial ``sits'' at the fixed time $\tau =0$. 

I briefly recall some of the properties of $\cW$ proven in \cite{SW,PSUV}.
By Theorem 4.5 of \cite{PSUV}, the limit 
$\cW (\cC,\cV)
= 
\lim_{N \to \infty} \cW^{(N)} (R^{(N)},\cV) 
$
exists in $\norm{\cdot}_h$ for small enough $h$.
Let $h=\delta$ in that theorem and denote
the part of $\cW$ that is homogeneous of degree $p$ in $\cV$ 
by $\cW (\cC, \cV; p)$, then
\beq\label{normleq}
\norm{
\cW(\cC, \cV)
-
\sum_{p=1}^P
\frac{1}{p!}
\cW (\cC, \cV; p)}_{\delta, \bX}
\le
\frac{(\omega \norm{\cV}_{3\delta,\bX})^P}{1- \omega \norm{\cV}_{3\delta,\bX}} 
\norm{\cV}_{3\delta,\bX}
\eeq 
holds for all $P \ge 0$, provided $\omega \norm{\cV}_{3\delta,\bX} < 1$
(for $P=0$ the empty sum in \Ref{normleq} is zero; 
this is the only case needed to prove Theorem \ref{th:clu}).

The norm $\norm{\cdot}_{h,\bX} $ is similar to the norm
$\norm{\cdot}_{h} $ defined in \Ref{hnormdef}, 
but with $\abs{\cdot}_{1,\infty}$ in \Ref{1infdef} 
replaced by a $1,\infty$--norm $\abs{\cdot}_{1,\infty,\bX}$
in which integrals and suprema run over 
$\bX = [0,\beta) \times \Ind$.
Because the interaction $V$ gives rise to a $\cV$ that is local in 
the time $\tau \in [0,\beta)$, 
$\norm{\cV}_{h,\bX} = \norm{V}_h$.

In summary, for small enough $\omega \norm{V}_{3\delta}$, 
$\cW$ is analytic in $\cV$ and in the fields.
In particular, $\cW$ is continuous in the fields,
and
\beq
\lim_{N \to \infty} \cW^{(N)} (R^{(N)},\cV) (\etq^{(N)},\eta^{(N)})
=
\cW (\cC, \cV ) (\etq,\eta)
\eeq
with $\etq (\tau) = \cC^{\rm T}_{\beta,\tau} \gsiq$ and 
$\et (\tau) = \cC_{\tau,0} \gsi $.
Let $\cW^{(N)}_{2m}$ be the homogeneous part of $\cW^{(N)}$ of degree $2m$
in $\etq,\eta$. It generates the connected amputated $2m$--point functions
$W^{(N)}_{2m} = \frac{1}{(m!)^2}\frac{\de^m}{\de \eta^m}\,\frac{\de^m}{\de \etq^m} \cW^{(N)}_{2m}$. 
Then $W^{(N)}_{2m}$ is given by a tree expansion
\cite{SW,PSUV}, i.e.\ an explicit, 
absolutely and uniformly in $N$ convergent expansion in which 
the limit $N \to \infty$ can be taken termwise. 
Denote the limit by $W_{2m}$. Then, if 
$\omega \norm{V}_{3\delta} \le 1/2$,
\beq\label{Wmleq}
\abs{W_{2m}}_{1,\infty, \bX}
\le 
2 \delta^{-2m} \; \norm{V}_{3 \delta}
\eeq
for all $m \ge 1$, and
\beq
Z(\gsiq, \gsi ) 
= 
Z_0\;
\E^{(\gsiq, f_\beta(\Kin) \gsi)_\Ind} \;
\E^{-\cW (\etq,\eta)}
\eeq
so $F= \log Z$ becomes
\begin{eqnarray}\label{full2}
F
(\gsiq, \gsi ) 
&=& 
\log Z_0 + (\gsiq, f_\beta(\Kin) \gsi)_\Ind
-\cW (\etq,\et) 
\\
&=&
\log Z_0 + (\gsiq, 
\left[f_\beta(\Kin) 
- 
\cK_\beta
\right]\gsi)_\Ind
-\sum_{m \ge 2} \cW_{2m}(\cC_{\beta,\,\cdot\,} \gsiq,\cC_{\,\cdot\, , 0} \gsi)
\nonumber
\end{eqnarray}
with $\cK_\beta = \int \dd \tau \int \dd \tau' \; \cC_{\beta,\tau}
\cW_2 (\tau,\tau') \cC_{\tau',0} $.
In \Ref{full2}, the full two-point function is explicit in the quadratic part 
of the generating function. The analogue of the free fermion formula, 
obtained by discarding the $\cW_{2m}$ for all $m \ge 2$, 
changes correspondingly.
Under the present assumptions, $\cW$ is small when
$\norm{\cV}_h$ is small, and hence it is clear that all higher truncated
functions are small and $\cK_\beta$ is small in $\abs{\cdot}_{1,\infty}$, 
i.e.\ the two--point function ($m=1$) is close to the free one.

Theorem \ref{th:clu} now follows immediately from the above properties of $\cW$:
Because the arguments of $\cW$ in \Ref{full2} are 
$\etq = \cC_{\beta,\,\cdot\,} \gsiq$ and $\eta = \cC_{\,\cdot\, , 0} \gsi$, 
\begin{eqnarray}
\gamma_m^T (\Xx_1, \ldots, \Xx_{2m}) 
&=&
\int \pli_{k=1}^{2m} \dd \tau_k \dd \Xy_k \; 
(m!)^2
W_{2m} \left(
(\tau_1,\Xy_1), \ldots, (\tau_{2m},\Xy_{2m})
\right)
\nonumber\\
&&
\pli_{k=1}^m 
\left(\cC (\beta,\tau_k)\right)_{\Xx_k,\Xy_k}
\left(\cC (\tau_{k+m},0)\right)_{\Xy_{k+m},\Xx_{k+m}} 
\end{eqnarray}
for all $m \ge 2$, hence
\beq
\sfrac{1}{(m!)^2}
\abs{\gamma_m^{T}}_{1,\infty}
\le
\abs{W_{2m}}_{1,\infty,\bX}\;
\abs{\cC}_{1,\infty,\bX}^{2m}
=
\abs{W_{2m}}_{1,\infty,\bX}\;
\alpha^{2m}
\eeq
by \Ref{eq:alpha}. By \Ref{Wmleq}, 
\beq
\abs{\gamma_m^{T}} 
\le
2 (m!)^2\;\delta^{-2m} \; \norm{V}_{3 \delta} \alpha^{2m}
\eeq
which proves \Ref{ell1clu}. 
For $m=1$, recall that $\gamma^T_{1,0} = f_\beta (\cE)$, so 
$\gamma_{1}^T - \gamma^T_{1,0} = \cK_\beta$, as defined after
\Ref{full2}. Thus, again by \Ref{Wmleq} and \Ref{eq:alpha}
\beq
\abs{\gamma_{1}^T - \gamma^T_{1,0}}_{1,\infty}
=
\abs{\cK_\beta}_{1,\infty}
\le 
\alpha^2 \abs{W_2}_{1,\infty,\bX}
\le
2 \alpha^2 \delta^{-2} \norm{V}_{3\delta}, 
\eeq
which proves \Ref{2pf}.

\appendix
\section{Traces and Grassmann Integrals}

\begin{lemma}\label{le:liemod}
Let $\cF$ be a Hilbert space and   $A$ and $B$ be
bounded linear operators on $\cF$. 
\begin{equation}
\E^{A+B}
=
\lim\limits_{N \to \infty}
\left[
\E^{\frac{A}{N} } \; (1 + \frac{B}{N} )
\right]^N
\end{equation}
\end{lemma}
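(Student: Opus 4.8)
The plan is to prove this as a variant of the Lie--Trotter product formula: one replaces $\E^{B/N}$ by its first--order Taylor polynomial $1+B/N$, and since these two operators agree up to terms of order $N^{-2}$, the standard telescoping argument carries over with only cosmetic changes. Write $a=\norm{A}$, $b=\norm{B}$, and set $S_N=\E^{A/N}(1+B/N)$ and $T_N=\E^{(A+B)/N}$, so that the right--hand side of the claimed identity is $\lim_N S_N^N$, while $T_N^N=\E^{A+B}$ holds exactly because the exponents all commute. The first step is to record the telescoping identity, valid in any Banach algebra,
\[
S_N^N-T_N^N=\sum_{k=0}^{N-1}S_N^{\,k}\,(S_N-T_N)\,T_N^{\,N-1-k}.
\]

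The second step is to establish two elementary estimates. For the powers, the submultiplicativity of the operator norm together with $1+b/N\le\E^{b/N}$ gives $\norm{S_N}\le\E^{a/N}\E^{b/N}=\E^{(a+b)/N}$ and likewise $\norm{T_N}\le\E^{(a+b)/N}$, hence $\norm{S_N^{\,k}},\norm{T_N^{\,j}}\le\E^{a+b}$ uniformly for $0\le k,j\le N$. For the single--step difference I would expand both operators about the identity: both $S_N$ and $T_N$ equal $1+(A+B)/N$ plus a remainder, and each remainder is $O(N^{-2})$ in norm, using the standard bounds $\norm{\E^{Y}-1-Y}\le\frac{1}{2}\norm{Y}^2\E^{\norm{Y}}$ and $\norm{\E^{Y}-1}\le\norm{Y}\E^{\norm{Y}}$ (the latter applied to the cross term $(\E^{A/N}-1)(B/N)$ coming from $S_N$). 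This yields $\norm{S_N-T_N}\le C\,N^{-2}$ with a constant $C$ depending only on $a$ and $b$.

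Combining these, the telescoping sum has at most $N$ terms, each of norm at most $\E^{a+b}\cdot CN^{-2}\cdot\E^{a+b}$, so $\norm{S_N^N-T_N^N}\le C\E^{2(a+b)}\,N^{-1}\to 0$; since $T_N^N=\E^{A+B}$, this proves the lemma. There is no genuine obstacle here; the only point requiring a moment's care is the bookkeeping in the $N^{-2}$ bound on $S_N-T_N$, where the cross term $\E^{A/N}(B/N)$ must be tracked separately. This is precisely the place where the second--order agreement of $1+B/N$ with $\E^{B/N}$ enters, and it makes the estimate identical in spirit to the one in the usual Trotter formula.
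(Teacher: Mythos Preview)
Your proof is correct and follows essentially the same approach as the paper's: both set up the same two operators, bound their norms by $\E^{(\norm{A}+\norm{B})/N}$, and use the telescoping identity together with an $O(N^{-2})$ bound on the single-step difference to conclude. The paper simply refers to \cite{msbook} for the telescoping argument and records the final estimate $\norm{C^N-D^N}\le \frac{4}{N}(\norm{A}+\norm{B})^2\E^{\norm{A}+\norm{B}}$ (a slightly sharper constant than yours, obtained by combining $\norm{S_N^k}\norm{T_N^{N-1-k}}\le\E^{(a+b)(N-1)/N}$ rather than bounding each factor separately by $\E^{a+b}$).
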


\begin{proof}
This is an easy variant of the standard proof of the Lie product formula, 
as given, e.g.\ in \cite{msbook}. Let $C= \E^{(A+B)/N}$ and 
$D=\E^{A/N} (1+ B/N)$. Then 
\begin{equation}
\max\{\norm{C}, \norm{D} \}
\le 
\E^{\frac{\norm{A}+\norm{B}}{N}}
\end{equation}
and by following the proof given in \cite{msbook}, one obtains the estimate
\begin{equation}
\norm{C^N - D^N} 
\le
\frac{4}{N}
(\norm{A}+\norm{B})^2 \;
\E^{\norm{A}+\norm{B}} .
\end{equation}
\end{proof}

\noindent
By assumption, the one--particle Hilbert space is finite--dimensional, 
i.e.\ has an ONB indexed by a finite set $\Ind$. 
Thus the Fock space $\cF$ is finite--dimensional, too, 
hence all linear operators on $\cF$ are bounded 
and the trace is a continuous linear functional. 
Therefore by Lemma \ref{le:liemod}, 
\begin{equation}
Z(\gsiq, \gsi ) 
= 
\lim\limits_{N \to \infty}
\tilde Z_N (\gsiq,\gsi)
\end{equation}
where 
\begin{eqnarray}
\tilde Z_N (\gsiq,\gsi)
=
\Tr 
\left\lbrack
\rho_N \;
\E^{(\gsi, a_+)} \E^{(\gsiq, a)}
\right\rbrack
\end{eqnarray}
and 
$\rho_N = (C_ND_N)^N$ 
with $C_N=\E^{-\frac{\beta}{N} K_0 }$ and $D_N=1 - \sfrac{\beta}{N} V $.
If $V$ is normal ordered, the same is true for $D_N$ (but not for 
$\E^{-\beta V/N}$). This is the main reason that makes Lemma \ref{le:liemod}
convenient in the transition to the Grassmann integral done in the 
following. 

By a purely algebraic transformation,
$\tilde Z_N (\gsiq,\gsi)$ is represented as 
a finite--dimensional Grassmann integral. 
Some basic facts needed for this are in Appendix B of \cite{msbook}.
Here I collect only the most important definitions and identities.
The  {\em Grassmann symbol} $\Gra{A}$ of an operator $A$
with normal ordered form $\cN (A) (\cre,\ann)$
is defined as
\begin{equation}\label{eq:Grasdef}
\Gra{A} (\psp,\psm) 
=
\E^{(\psp,\psm)_\Ind} \; \cN (A) (\psp,\psm) .
\end{equation}
The product of two operators has Grassmann symbol
\begin{equation}\label{eq:prosym}
\Gra{AB}(\psq,\ps)
=
\int \cD_\Ind (\psq',\ps') \;
\Gra{A}(\psq,\ps')\;
\E^{-(\psq',\ps')_\Ind }\;
\Gra{B}(\psq',\ps)
\end{equation}
The trace is represented by 
\begin{equation}
\Tr A
=
\int \cD_\Ind (\psp,\psm) \;
\Gra{A}(-\psp,\psm)\;
\E^{-(\psp,\psm)_\Ind }
\end{equation}
By these identities, $Z$ can be written as the convolution
\begin{equation}\label{eq:Zsisi}
\tilde Z_N (\gsiq,\gsi)
=
\int 
\cD_\Ind (\thq,\th) \;
\Gra{\rh_N}
(\bar\th,\th)
\;
\E^{(\bar\th+\gsiq,\th+\gsi)} .
\end{equation}
By the rule for products of Grassmann symbols, \Ref{eq:prosym}, 
\begin{eqnarray}
\tilde Z_N (\gsiq,\gsi)
&=&
\int 
\pli_{k=1}^N 
\cD_\Ind (\psq_k,\ps_k) \;
\E^{(\psq_1 + \gsiq, \ps_N + \gsi)_\Ind}
\nonumber\\
&&
\pli_{k=1}^N
\Gra{C_N D_N} (\psq_k,\ps_k) \;
\pli_{k=1}^{N-1}
\E^{-(\psq_{k+1},\ps_k)_\Ind}\;
\end{eqnarray}
The decomposition $H=H_0 + V$ is chosen such that $H_0$ is quadratic, 
so that $K_0 = (\cre , \Kin \ann)_\Ind$ with a hermitian matrix $\Kin$. 
For this special case, the Grassmann symbol is easily obtained by 
Grassmann Gaussian integration, with the result
\begin{equation}
\Gra{C_N D_N} (\psq_k,\ps_k)
=
\Gra{D_N} (\UN^T \psq_k,\ps_k)
\end{equation}
where $\UN = \E^{-\frac{\beta}{N}\Kin}$ (and $T$ denotes the transpose).
The interaction $V$ is assumed to be in normal ordered form, 
so its Grassmann symbol is straightforwardly obtained by replacing
$\cre$ by $\psq$ and $\ann$ by $\psi$. 
Inserting this and changing variables from $\psq_k$ to 
$\UN^T \psq_k$ gives
\begin{eqnarray}
\tilde Z_N (\gsiq,\si)
&=&
\det \E^{-\Kin}
\int 
\pli_{k=1}^N 
\cD_\Ind (\psq_k,\ps_k) \;
\E^{(\UN^T \psq_1 + \gsiq, \ps_N + \gsi)_\Ind}
\nonumber\\
&&
\pli_{k=1}^n
\left(
1 - \frac{\beta}{N} V(\psq_k,\ps_k)
\right) \;
\pli_{k=1}^{N-1}
\E^{-(\UN^T \psq_{k+1},\ps_k)_\Ind}\;
\nonumber\\
&=&
\E^{(\gsiq,\gsi)_\Ind} \; 
\det \E^{-\Kin}
\int 
\pli_{k=1}^N 
\cD_\Ind (\psq_k,\ps_k) \;
\E^{(\psq, Q^{(N)} \ps )}
\nonumber\\
&&
\pli_{k=1}^n
\left(
1 - \frac{\beta}{N} V(\psq_k,\ps_k)
\right) \;
\E^{(\gsiq, \ps_N)_\Ind + (\psq_1 , \UN \gsi)_\Ind}
\end{eqnarray}
with 
\begin{equation}
(\psq, Q^{(N)} \ps )
=
\sum_{k=1}^N
(\psq_k, \ps_k)_\Ind 
-
\sum_{k=1}^{N-1}
(\psq_{k+1}, \UN \ps_k)_\Ind 
+ 
(\psq_1, \UN \ps_N)_\Ind
\end{equation}
That $\psq_1$ couples to $\ps_N$ with the opposite sign
means that there is an antiperiodic boundary condition for
the fields $\psq$ and $\ps$.
$Q^{(N)}$ acts as a matrix on the ``time'' indices $k$ and as an 
operator on the $\Xx \in \Ind$:
\begin{equation}
Q^{(N)}_{m,n}
=
\de_{m,n}
-
\UN \de_{m-1,n}
+
\UN \de_{m,1} \de_{N,n}
=
Q^{(0,N)}_{m,n} + \UN \de_{m,1} \de_{N,n}
\end{equation}
$Q^{(0,N)}$ is lower triangular and its inverse is easily found 
to be $R^{(0,N)}$
\begin{equation}
R^{(0,N)}_{kl}
=
\UN^{k-l} \True{k \ge l}
\end{equation}
The additional term in $Q$ is similar to a rank one perturbation. 
Using the standard formulas, one obtains $(Q^{(N)})^{-1} = R^{(N)}$
as 
\begin{eqnarray}
R^{(N)}_{m,n}
&=&
\UN^{m-n} 
\left(
\True{m \ge n} 
-
(1 + \UN^{-N})^{-1}
\right)
\nonumber\\
&=&
\UN^{m-n}
\cases{
- (1 + \UN^{-N})^{-1} & for $m < n$\cr
(1 + \UN^{N})^{-1} & for $m \ge n$
}
\end{eqnarray}
we obtain the formula \Ref{eq:RNdef} for $R^{(N)}$.
Thus, in summary 
\begin{eqnarray}
\tilde Z_N (\gsiq,\gsi)
&=&
\E^{(\gsiq,\gsi)_\Ind} \; 
Z_0^{(N)}
\int 
\dd\mu_{R^{(N)}} (\psq,\ps) \; 
\E^{(\gsiq, \ps_N)_\Ind + (\psq_1 , \E^{-\frac{\beta}{N} \Kin} \gsi)_\Ind}
\nonumber\\
&&
\pli_{k=1}^n
\left(
1 - \frac{\beta}{N} V(\psq_k,\ps_k)
\right) \;
\end{eqnarray}
where $Z_0^{(N)} = \det \E^{-\Kin} \; \det Q^{(N)}$.

It remains to reexponentiate $V$. 
To this end, recall from Definition B.11 of \cite{msbook} the 
$\ell^1$ seminorm $\tnorm{\cdot}_q$ 
on the Grassmann algebra,
namely $\tnorm{A}_q = \sum_{m \ge 1} \abs{A_m}_1 q^m$
with $\abs{A_m}_1 = \int \prod_{i=1}^{m} \dd \Xx_i \;
\abs{A_{m}(\Xx_1, \ldots, \Xx_{m})}$.
It satisfies the product inequality
$\tnorm{AB}_q \le \tnorm{A}_q \tnorm{B}_q$, 
and if the covariance $C$ has determinant bound $\delta$, 
then 
\begin{equation}\label{Gintnorm}
\abs{\int \dd \mu_C(\Psi) F(\Psi) }
\le
\tnorm{F}_{\delta} 
\end{equation}
(see Theorem B.14 in \cite{msbook}).
Now let 
\begin{equation}
\Delta (\Psi)
=
\pli_{k=1}^N \E^{-\frac{\beta}{N} V (\Ps_k)}
-
\pli_{k=1}^N 
\left(
1 - \frac{\beta}{N} V (\Ps_k) 
\right) .
\end{equation}
By the discrete product rule
\begin{equation}
\pli_{k=1}^N A_k
-
\pli_{k=1}^N B_k
=
\sum_{l=1}^N
\pli_{k < l} A_k\;
(A_l-B_l) \;
\pli_{k>l} B_k ,
\end{equation}
and by \Ref{Gintnorm} and the triangle and product inequalities, 
\begin{eqnarray}
\abs{\int \dd \mu_C(\Psi) \Delta (\Psi) }
&\le&
\tnorm{\Delta}_\delta
\nonumber\\
&\le &
N 
\E^{\frac{(N-1)}{N} \beta \tnorm{V}_\delta}
\tnorm{\E^{-\frac{\beta}{N} V} - 1 + \frac{\beta}{N} V }_\delta
\nonumber\\
&\le&
N 
\E^{\frac{(N-1)}{N} \beta \tnorm{V}_\delta}
\frac{\beta^2 \tnorm{V}_q^2}{N^2} \;
\E^{\frac{\beta}{N} \tnorm{V}_\delta}
\nonumber\\
&=&
\frac{\beta^2 \tnorm{V}_q^2}{N} \;
\E^{\beta \tnorm{V}_\delta}\to 0
\end{eqnarray}
as $N \to \infty$. 
Note that if $\Ind$ is a lattice of volume $\Omega$
with a translation group, $\tnorm{G}_q$ is typically
of order $\Omega$ for translation invariant elements $G$ of the 
Grassmann algebra, but this does not matter here because
$\Omega$ remains fixed as $N \to \infty$. 

Thus
\begin{equation}
Z(\gsiq, \gsi ) 
= 
\lim\limits_{N \to \infty}
Z_N (\gsiq,\gsi)
\end{equation}
where
\begin{eqnarray}
Z_N (\gsiq,\gsi)
&=&
\E^{(\gsiq,\gsi)_\Ind} \; 
\det (1+ \E^{-\Kin}) \; 
\int 
\dd\mu_{R^{(N)}} (\psq,\ps) \; 
\E^{(\gsiq, \ps_N)_\Ind + (\psq_1 , 
\gsi)_\Ind}
\nonumber\\
&&
\E^{-\frac{\beta}{N} \sum_{k=1}^n V(\psq_k,\ps_k)}
\end{eqnarray}
with $R^{(N)}$ given in \Ref{eq:RNdef}.
Eq.\ \Ref{e7} now follows by a standard completion of the square
in the integrand. 

\bigskip\noindent
{\bf Acknowledgement. }
This work was supported by DFG grants FOR 718 and FOR 723.
Part of it was done during a very inspiring summer school 
at the {\em Erwin--Schr\" odinger--Institut} in Vienna. 
I would like to thank Christian Hainzl, Robert Seiringer, 
and Jakob Yngvason for the invitation to this summer school,
and Herbert Spohn for discussions.


\begin{thebibliography}{99}

\bibitem{BR}
O.\ Bratteli, D.W.\ Robinson, 
{\em Operator Algebras and Quantum Statistical Mechanics I and II}.
Springer--Verlag, 2002

\bibitem{SW} 
M.\ Salmhofer, C.\ Wieczerkowski, 
{\em Positivity and Convergence in Fermionic Quantum Field Theory}.
J.\ Stat.\ Phys.\ {\bf 99} (2000) 557-586

\bibitem{PSUV}
W.\ Pedra, M.\ Salmhofer, 
{\em Determinant Bounds and the Matsubara UV Problem of Many--Fermion Systems}. 
Comm.\ Math.\ Phys.\   {\bf 282} (2008) 797--818

\bibitem{FKT}
J.\ Feldman, H.\ Kn\" orrer, E.\ Trubowitz, 
{\em Convergence of Perturbation Expansions in Fermionic Models,
Part 1: Nonperturbative Bounds}.
Comm. Math. Phys.\ {\bf 247} (2004) 195-242.

\bibitem{BFKT}
T.\ Balaban, J.\ Feldman, H.\ Kn\" orrer, E.\ Trubowitz, 
{\em  Power Series Representations for Bosonic Effective Actions}; 
{\em  A Functional Integral Representation for Many Boson Systems I--III}. 
preprints, to appear (see http://www.math.ubc.ca/$\sim$feldman)

\bibitem{LS}
J.\ Lukkarinen, H.\ Spohn, 
{\em Not to normal order - Notes on the kinetic limit for weakly interacting quantum fluids}.
arXiv:0807.5072

\bibitem{msbook}
M.\ Salmhofer, {\em Renormalization}. Springer--Verlag, 1998

\bibitem{ESYQBE}
L.\ Erd\H os, M.\ Salmhofer, H.--T.\ Yau,  
{\em On the Quantum Boltzmann Equation}.
{\sl J.\ Stat.\ Phys.\ \bf 116} (2004) 367-380


\end{thebibliography}
\end{document}